\newcommand{\R}{\mathbb{R}}
\newcommand{\N}{\mathbb{N}}
\newcommand{\one}{\mathbf{1}}
\newcommand{\multisetds}[2]{\bigg(\kern-.4em\binom{#1}{#2}\kern-.4em\bigg)}
\newcommand{\multisetin}[2]{\big(\kern-.3em\binom{#1}{#2}\kern-.3em\big)}
\newcommand{\multisetix}[2]{\left(\kern-.2em\binom{#1}{#2}\kern-.2em\right)}
\newcommand{\PR}[1]{P\big[#1\big]}
\newcommand{\mats}{\ensuremath{\mathcal{S}}}
    \newcommand{\eqref}[1]{~(\ref{#1})}
    \def\mod{\mathop{\rm mod}}
\def\exp{\mathop{\rm exp}}
\def\tr{\mathop{\rm tr}}
\def\eqdef{\stackrel{\triangle}{=}}
\def\pgeq{\succcurlyeq}
\def\simleq{{{\mskip 3mu plus 2mu minus 1mu%
    \setbox0=\hbox{$\mathchar"013C$}%
    \raise.2ex\copy0\kern-\wd0%
    \lower0.9ex\hbox{$\mathchar"0218$}}\mskip 3mu plus 2mu minus 1mu}}
\def\simleq{\lesssim}
\def\simgeq{{{\mskip 3mu plus 2mu minus 1mu%
    \setbox0=\hbox{$\mathchar"013E$}%
    \raise.2ex\copy0\kern-\wd0%
    \lower0.9ex\hbox{$\mathchar"0218$}}\mskip 3mu plus 2mu minus 1mu}}
\def\simgeq{\gtrsim}
\DeclareMathOperator{\img}{Im}
\newtheorem{theorem}{Theorem}
\newtheorem{remark}{Remark}
\newtheorem{proposition}{Proposition}
\newtheorem{corollary}{Corollary}
\newcommand{\bcomment}[1]{}
\newcommand{\TODO}[1]{}
\begin{document}
\title{Rate-distance tradeoff for codes above graph capacity}

\IEEEoverridecommandlockouts

\author{\IEEEauthorblockN{Daniel Cullina}
\IEEEauthorblockA{
University of Illinois at Urbana-Champaign\\
cullina@illinois.edu}
\and
\IEEEauthorblockN{Marco Dalai}
\IEEEauthorblockA{University of Brescia\\
marco.dalai@unibs.it}
\and
\IEEEauthorblockN{Yury Polyanskiy}
\IEEEauthorblockA{Massachusetts Institute of Technology\\
yp@mit.edu}%
\thanks{
The research was supported by the NSF grant CCF-13-18620 and NSF Center for Science of Information (CSoI) 
under grant agreement CCF-09-39370. The work was partially done while visiting the Simons Institute for the
    Theory of Computing at UC Berkeley, whose support is gratefully acknowledged. }
}

\maketitle

\begin{abstract}
The capacity of a graph is defined as the rate of exponential growth of independent sets in the strong powers of the graph.
In the strong power an edge connects two sequences if at each position their letters are equal or adjacent.
We consider a variation of the problem where edges in the power graphs are removed between sequences which differ in more than a fraction $\delta$ of coordinates.
The proposed generalization can be interpreted as the problem of determining the highest rate of zero undetected-error communication over a link with adversarial noise, where only a fraction $\delta$ of symbols can be perturbed and only some substitutions are allowed.

We derive lower bounds on achievable rates by combining graph homomorphisms with a graph-theoretic generalization of the Gilbert-Varshamov bound.
We then give an upper bound, based on Delsarte's linear programming approach, which combines Lov\'asz' theta function with the construction used by McEliece et al. for bounding the minimum distance of codes in Hamming spaces.
\end{abstract}


\section{Introduction}
The problem we consider is the following. Given a graph $G$ we define a semimetric on the vertex set $V(G)$ 
\[ d(v,v') = \begin{cases} 0, & v=v',\\
            1, & \{v, v'\} \in E(G) ,\\
            \infty, & \mbox{otherwise.} \end{cases}\]
We extend this semimetric additively to the Cartesian products $V(G)^n$ and define a graph $G(n,d)$ as follows
\begin{IEEEeqnarray*}{rCl}
V(G(n,d)) &=& V(G)^n\\
E(G(n,d)) &=& \left\{(x,x'): d(x,x') \eqdef \sum_{j=1}^n d(x_j, x_j') \le d \right\}\,.
\end{IEEEeqnarray*}
The goal is to determine (bounds on)
\[ R^*(G, \delta) \eqdef \limsup_{n\to\infty} \frac{1}{n} \log \alpha(G(n,\delta n))\,.\]

Note that $G(n,d)$ corresponds to the graph obtained by removing in the strong power graph $G^n$ edges between sequences which differ in more than $d$ positions. On one hand, this problem is a specialization of the general one considered in \cite{dalai-TIT-2015}. On
the other hand, it is a natural generalization of the two classically studied ones:
\begin{enumerate}[itemindent=0.5cm, leftmargin=0cm]
\item {\em Shannon capacity of a graph}~\cite{shannon-1956}, which corresponds to $\delta=1$. The best general upper bound is~\cite{lovasz-1979}
    \begin{equation}\label{eq:lovasz}
     R^*(G, 1) \le \log \theta_L(G)\,,
\end{equation}    
where $\theta_L$ is the Lovasz theta function.
\item {\em Rate-Distance tradeoff in Hamming spaces}, which corresponds to $G=K_q$ (the clique). Here the two bounds we mention are 
\begin{equation}\label{eq:hamming}
    R_{GV}(q,\delta) \le R^*(K_q,\delta) \le R_{LP1}(q,\delta)\,,
\end{equation}
where for $\delta < 1-\frac{1}{q}$
\begin{align} R_{GV}(q,\delta) &\eqdef \log q - H_q(\delta)\,,\\
   R_{LP1}(q,\delta) &\eqdef H_q\left(\frac{(q-1)-(q-2)\delta - 2\sqrt{(q-1)\delta(1-\delta)}}{q}\right)\,,\label{eq:rlp1}\\
   H_q(x) &\eqdef x \log(q-1) - x \log x - (1-x)\log(1-x)\,.
\end{align}   
For $\delta\ge 1-\frac{1}{q}$ both $R_{GV}$ and $R_{LP1}$ equal zero.%
\footnote{Better bounds also exist: an improved upper bound for for small $\delta$'s was found by Aaltonen~\cite{aaltonen1990new}, and an
improved lower bound for large $q$'s and some range of $\delta$'s is shown via algebraic-geometric codes~\cite{tsfasman1982modular}.}
We refer the point $\delta = 1-\frac{1}{q}$ as the Plotkin point.%
\footnote{The Plotkin bound is the simplest upper bound that establishes that $R^*(K_q,\delta) = 0$ for $\delta \ge 1-\frac{1}{q}$.}
\end{enumerate}
The proposed problem can be interpreted as the natural extension of the notion of rate-distance tradeoff to the case where only some substitutions are allowed.

In this paper we derive both upper and lower bound on $R^*(G,\delta)$ for different classes of graphs. In particular, among other more specific bounds, we prove that if $G$ is vertex-transitive with independence number $\alpha(G)$, then
\begin{equation}
R^*(G, \delta) \geq \log \alpha(G) + R_{GV}\left(\frac{|V(G)|}{\alpha(G)}, \delta\right)\label{eq:LB-general}
\end{equation}
and if $G$ is also edge-transitive, then
\begin{equation}
R^*(G, \delta) \leq \log \theta_L(G) + R_{LP1}\left(\frac{|V(G)|}{\theta_L(G)}, \delta\right)\,.\label{eq:UB-general}
\end{equation}
A graph is vertex-transitive if its automorphism group is transitive on the vertex set and edge-transitive if the group is transitive on the edge set.
These two bounds can be interpreted as simultaneous generalizations of equation \eqref{eq:hamming}, since $\alpha(K_q)=\theta_L(K_q)=1$, and of the known bounds on the graph capacity
\begin{equation}
\log \alpha(G)\leq R^*(G,1)\leq \log \theta_L.
\end{equation}
Note however that for general symmetric graphs, the quantities   $|V(G)|/\alpha(G)$ and $|V(G)|/\theta_L(G)$ which appear in the usual role of alphabet size, are in general not integers. To the best of our knowledge this is the first appearance of non-integer quantities in this role.

The main tools used for our achievability results are graph homomorphisms and a graph-theoretic generalization of the Gilbert-Varshamov bound.
Our main converse, instead, is obtained by adapting ideas from the Delsarte's linear programming bound.

We use the following standard graph theoretic notation \cite{west_introduction_2001}. 
For a graph $G$, we denote by $\alpha(G)$ the size of a largest independent set and by $\omega(G)$ the size of a largest clique.  We denote with $\chi(G)$ and $\chi^*(G)$ the chromatic and the fractional chromatic number respectively. Finally, $\theta^*(G)=\chi^*(\overline{G})$ is the fractional clique covering number.

\section{Preview Examples}
\label{sec:examples}
\begin{figure}
\includegraphics[width=\columnwidth]{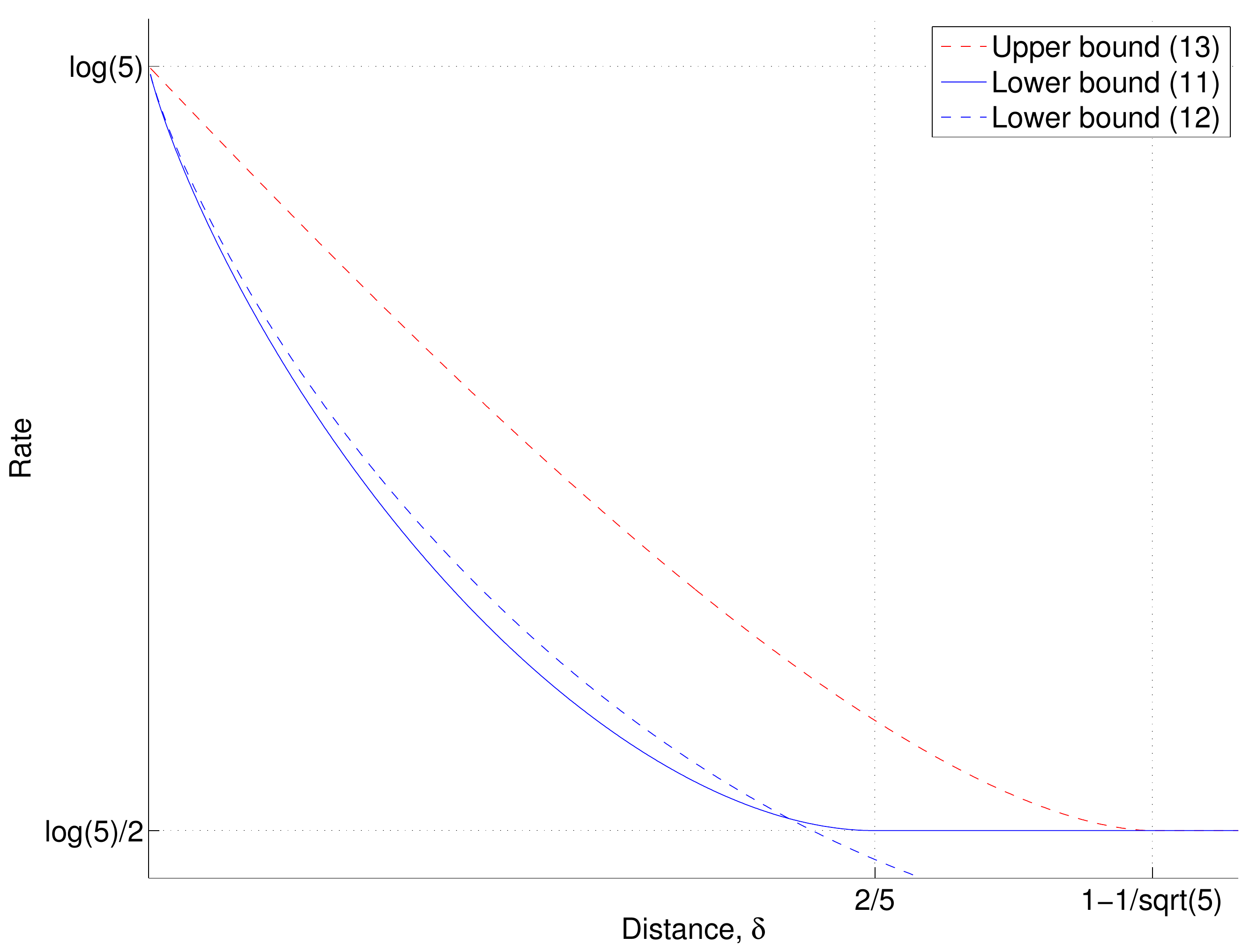}
\caption{Bounds on $R^*(G,d)$ for the pentagon.}
\label{fig:pentagon}
\end{figure}

Let $K_q$ be the complete graph on $q$ vertices and let $G$ be the disjoint union of $m$ copies of $K_q$, i.e. $G = K_q+K_q+\cdots+K_q = mK_q$. 
Then according to Proposition \ref{prop:tight-cover} below we have
\begin{equation}
R^*(G,\delta)=\log(m)+R^*(K_q,\delta).
\end{equation}
Here the situation is quite intuitive. $G^n$ is itself a disjoint union of $m^n$ equally sized cliques, each being equivalent, under the considered semimetric, to a $q$-ary Hamming space. Thus, $\log(m^n)$ bits are associated to the choice of the clique and within each clique we can additionally pack sequences at distance $n\delta$ at an asymptotic rate $ R^*(K_q,\delta)$.

Consider now the case of an even cycle $C_{2m}$, which might be interpreted as a first example of non-disjoint cliques of size $2$.
Proposition \ref{prop:tight-cover} says that the problem still reduces to the binary case:
\begin{equation}
R^*(C_{2m},\delta)=\log(m)+R^*(K_2,\delta).
\end{equation}
Here as well we might think in some sense of having partitioned our global space in $m^n$ binary Hamming spaces, though a more careful analysis is required to appreciate the details.

For odd cycles the situation is different. The best we can prove is based on equations \eqref{eq:LB-general} and
\eqref{eq:UB-general}, where \eqref{eq:LB-general} can in general be improved by also considering powers of $G$ (see
Proposition \ref{prop:using-powers} below). For the pentagon, for example, equation \eqref{eq:LB-general} applied to $C_5$ and to $C_5^2$ leads respectively to
\begin{align}
R^*(C_5,\delta) & \geq \log(2)+R_{GV}(5/2,\delta)\\
R^*(C_5,\delta)& \geq \frac{1}{2}\log(5)+\frac{1}{2}R_{GV}(5,2\delta),\label{eq:LB-C_5^2}
\end{align}
the first being stronger for $\delta\lesssim 0.353$. Equation \eqref{eq:UB-general} gives
\begin{equation}
R^*(C_5,\delta)\leq \frac{1}{2}\log(5)+R_{LP1}(\sqrt{5},\delta).
\end{equation}
Figure \ref{fig:pentagon} shows the corresponding plots. 

In this case, an interpretation of the bounds \eqref{eq:LB-general} and \eqref{eq:UB-general} in terms of a partition of the space into some number of Hamming-like spaces requires fractional values for their alphabet sizes.
Note that in the case of $C_5$, our bounds do not pin down what might be in our context the equivalent of the Plotkin point, i.e. the value $\delta_P$ such that $R^*(G,\delta) = R^*(G,1)$ for $\delta \geq \delta_P$ and $R^*(G,\delta) > R^*(G,1)$ for $\delta < \delta_P$.

The gap between bounds observed for $C_5$ might not be surprising, since odd cycles are notoriously hard to deal with in general.
Another very simple example gives a feeling of the subtleties which one should expect in this context. 
Consider the simplest possible case of disjoint union of unequally sized cliques: $G=K_1+K_2$. Proposition \ref{sec:unequal-cliques} gives
\begin{equation}
R^*(K_1+K_2,\delta)=\max_{0 \leq \lambda \leq 1} \left[ H_2(\lambda)+\lambda R^*(K_2,\delta/\lambda) \right].\label{eq:K2K1}
\end{equation}
Bounding $R^*(K_2,\cdot)$ via~\eqref{eq:hamming} we infer that the lower bound achieves $R=\log(2)$ at $\delta=1/4$ (obtained
for $\lambda=1/2$), while the upper bound only says 
$R\leq \log(2)$ at $d\gtrsim 0.2568$. Thus, determining the Plotkin point
even for this simple graph is as hard as improving the best known bound on $R^*(K_2,\delta)$! 

In view of the hardness of the case of $K_1+K_2$ it may be surprising that we can instead establish the Plotkin point for the
much more complicated Kneser graphs. Let $K_{c,a}$ be the graph whose vertices are the subsets of $\{1,2,\ldots,c\}$ of size $a$, two vertices being adjacent if and only if they are disjoint.
For these graphs we have $\alpha(K_{c,a}) = \binom{c-1}{a-1} = k$ and 
\begin{equation*}
     \log k + R_{GV} \left( \frac{c}{a}, \delta \right) 
 \le R^*(K_{c,a}, \delta) 
 \le \log k + R_{LP1} \left( \frac{c}{a}, \delta \right)\,. 
\end{equation*}
and the Plotkin point is at $\delta=1-\frac{a}{c}$.
Similar conclusions hold for any other edge-transitive graph $G$ with $\alpha(G)=\theta_L(G)$.

\section{Achievability bounds}
\label{sec:achievability}
Let $N(u)$ be the neighborhood of a vertex $u$, $N[u] = N(u) \cup \{u\}$, $N(S) = \cup_{u \in S} N(u)$, and $N[S] = N(S)
\cup S$. The following is a generalization of the standard Gilbert-Varshamov bound:

\begin{proposition}
\label{prop:general-lb}
Let $\mats$ be a family of independent sets in $G$ and let $S$ be a random variable with state space $\mats$.
Then $\alpha(G) \geq \sum_{u \in V(G)} w_u$, where 
\[
w_u = 
\begin{cases}
P[u \in S | u \in N[S]] & P[u \in N[S]] > 0\\
0 & \text{otherwise}.
\end{cases}
\]
\end{proposition}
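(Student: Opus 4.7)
The plan is to generalize the Caro--Wei argument for $\alpha(G) \geq \sum_u 1/|N[u]|$, which is the special case of this proposition obtained by taking $\mats = \{\{v\} : v \in V(G)\}$ with $S$ uniform (so that $w_u = 1/|N[u]|$). I will construct a random independent set $T \subseteq V(G)$ whose marginals satisfy $\PR{u \in T} = w_u$ for every $u$; the proposition then follows immediately from $\alpha(G) \geq \EE|T| = \sum_u w_u$.

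The construction uses iid copies of $S$. Working on an enlarged probability space, let $S_1, S_2, \ldots$ be independent random variables, each distributed as $S$. For each vertex $u$ with $p_u := \PR{u \in N[S]} > 0$, define
\[
\tau(u) = \min\{i \geq 1 : u \in N[S_i]\},
\]
which is geometric with success probability $p_u$ and hence almost surely finite. Put $u \in T$ iff $p_u > 0$ and $u \in S_{\tau(u)}$.

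The first step is to verify that $T$ is almost surely independent. Suppose adjacent vertices $u,v$ both lie in $T$, and set $j := \tau(u)$. Then $u \in S_j$ forces $v \in N(u) \subseteq N[S_j]$, so $\tau(v) \leq j$. By symmetry $\tau(u) \leq \tau(v)$, hence $\tau(u) = \tau(v) = j$; but then $u$ and $v$ are adjacent vertices of the independent set $S_j$, a contradiction.

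The second step is to compute the marginals. By the independence of the $S_i$, conditioning on the event $\tau(u) = j$ leaves $S_j$ distributed as $S$ conditioned on $\{u \in N[S]\}$, so
\[
\PR{u \in S_{\tau(u)} \mid \tau(u) = j} = \PR{u \in S \mid u \in N[S]} = w_u.
\]
Averaging over $j$ and using $\tau(u) < \infty$ a.s.\ gives $\PR{u \in T} = w_u$ when $p_u > 0$; the case $p_u = 0$ is trivial since then $w_u = 0$ by definition and $u \notin T$. I do not anticipate any real obstacle: the main task is spotting the correct random independent set to consider, after which the independence check and the marginal computation are each essentially one-line verifications.
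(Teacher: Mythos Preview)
Your proof is correct and is essentially the same as the paper's: the paper builds its random independent set iteratively by setting $A_{i+1} = T_i \setminus \bigcup_{j<i} N[T_j]$ for i.i.d.\ copies $T_i$ of $S$, and your set $T$ defined via the stopping times $\tau(u)$ coincides exactly with $\bigcup_i A_i$ (a vertex enters at the first time it appears in some $N[S_i]$, and is kept iff it lies in $S_i$ itself). Your stopping-time formulation makes the independence check and the marginal computation slightly more transparent, but the underlying construction and the probabilistic bookkeeping are identical.
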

\begin{IEEEproof}
For each $i \in \N$, let $T_i$ be an i.i.d. copy of $S$.
Define sequences $A_i, B_i$ as follows.
Initialize $A_0 = B_0 = \varnothing$.
Let $B_{i+1} = B_i \cup N[T_i]$ and let $A_{i+1} = T_i \setminus B_i$.
Our final independent set is $A_{\infty} = \bigcup_i A_i$.
Note that $B_i = \bigcup_{j=0}^i N[T_j] \supseteq \bigcup_{j=0}^i N[A_j]$.
Thus at step $i$ we exclude the vertices of $T_i$ that are adjacent to any members of $A_{j}$ for any $j < i$.

We have $E[|A_{\infty}|] = \sum_{u \in V(G)} P[u \in A_{\infty}]$.
If $P[u \in S] = 0$, then $P[u \in A_{\infty}] = 0$ as well.
For a vertex $u$ such that $P[u \in S] > 0$, $P[u \in B_{\infty}] = 1$, where $B_{\infty} = \bigcup_i B_i$.
We have
\[
\PR{u \in A_{i+1}} = \PR{u \in T_i, u \not\in B_i} = \PR{u \in T_i} \PR{u \not\in B_i}
\] 
because $B_i$ only depends on $T_j$ for $j < i$.
Now we have
\begin{IEEEeqnarray*}{rCl}
\PR{u \in A_{\infty}} 
&=& \sum_{i=0}^\infty \PR{u \in T_i | u \in N[T_i]} \PR{u \in N[T_i]} \PR{u \not\in B_i}\\
&=& \PR{u \in S | u \in N[S]} \sum_{i=0}^\infty \PR{u \in N[T_i], u \not\in B_i}\\
&=& \PR{u \in S | u \in N[S]} \PR{u \in B_{\infty}}.
\end{IEEEeqnarray*}
Thus $P[u \in A_{\infty}] = w_u$ for all $u$.
There must be some independent set in $G$ of size at least $E[|A_{\infty}|] = \sum_{u \in V(G)} w_u$.
\end{IEEEproof}

This is also a generalization of the Caro-Wei theorem \cite{alon_turans_2004}.
\begin{corollary}[Caro-Wei]
For any graph $G$, $\alpha(G) \geq \sum_{v \in V(G)} \frac{1}{d(v)+1}$.
\end{corollary}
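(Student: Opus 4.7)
The plan is to apply Proposition \ref{prop:general-lb} with the simplest possible choice of independent-set-valued random variable, namely a uniformly random singleton. Specifically, let $\mats = \{\{v\} : v \in V(G)\}$, and let $S = \{V\}$ where $V$ is a vertex drawn uniformly at random from $V(G)$. Each element of $\mats$ is trivially an independent set, so this is a valid instantiation.

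Next I would compute the weights $w_u$ for an arbitrary vertex $u$. Since $u \in N[S]$ if and only if $V \in N[u]$, and $u \in S$ if and only if $V = u$, the relevant probabilities are $\PR{u \in S} = 1/|V(G)|$ and $\PR{u \in N[S]} = (d(u)+1)/|V(G)|$, where the latter is strictly positive. Taking the ratio gives $w_u = \PR{u \in S \mid u \in N[S]} = 1/(d(u)+1)$.

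Finally, Proposition \ref{prop:general-lb} yields
\[
\alpha(G) \geq \sum_{u \in V(G)} w_u = \sum_{u \in V(G)} \frac{1}{d(u)+1},
\]
which is the Caro-Wei inequality. There is no serious obstacle here; the only thing to check is that one is allowed to take a random singleton (i.e., that singletons are independent sets), which is immediate.
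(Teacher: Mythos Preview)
Your proposal is correct and matches the paper's proof exactly: the paper also applies Proposition~\ref{prop:general-lb} with $S$ uniformly distributed over the singleton vertex sets. You have simply filled in the computation of $w_u = 1/(d(u)+1)$ that the paper leaves implicit.
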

\begin{proof}
Apply Proposition~\ref{prop:general-lb} with $S$ uniformly distributed over the singleton vertex sets.
\end{proof}
The following corollary will suffice for the rest of this paper.
\begin{corollary}
\label{cor:vt}
Let $G$ be a vertex-transitive graph and let $T$ be an independent set in $G$.
Then $\alpha(G) \geq \frac{|V(G)|\, |T|}{|N[T]|}$.
\end{corollary}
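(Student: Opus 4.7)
The plan is to apply Proposition~\ref{prop:general-lb} to a symmetrized random independent set obtained from $T$. Since $G$ is vertex-transitive, its automorphism group $\mathrm{Aut}(G)$ acts transitively on $V(G)$, which suggests taking a uniformly random $\pi \in \mathrm{Aut}(G)$ and setting $S = \pi(T)$. Because automorphisms preserve adjacency, $S$ is always an independent set in $G$, so it is a valid choice for Proposition~\ref{prop:general-lb}.

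The key computation is to evaluate the two probabilities entering the definition of $w_u$. By vertex-transitivity combined with the uniform choice of $\pi$, for every $u \in V(G)$ the probability $\PR{u \in S}$ takes the same value; averaging over $u$ and using $\EE|S| = |T|$ forces this common value to be $|T|/|V(G)|$. The same averaging trick, applied with the set $N[T]$ in place of $T$ (note that $\pi(N[T]) = N[\pi(T)] = N[S]$ because automorphisms commute with the neighborhood operation), gives $\PR{u \in N[S]} = |N[T]|/|V(G)|$ for every $u$. Using the obvious inclusion $\{u \in S\} \subseteq \{u \in N[S]\}$, the conditional probability simplifies to
\[
w_u \;=\; \PR{u \in S \mid u \in N[S]} \;=\; \frac{\PR{u \in S}}{\PR{u \in N[S]}} \;=\; \frac{|T|}{|N[T]|},
\]
independent of $u$. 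Proposition~\ref{prop:general-lb} then yields $\alpha(G) \geq \sum_{u} w_u = |V(G)|\,|T|/|N[T]|$.

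I do not foresee a real obstacle: the only thing to verify is that the conditional probability is well defined, which holds as soon as $T$ is nonempty (and the claim is trivial otherwise). The entire content of the proof is the observation that vertex-transitivity lets us symmetrize the distribution on $S$ so that the quantity $w_u$ from Proposition~\ref{prop:general-lb} collapses to the simple ratio $|T|/|N[T]|$ uniformly in $u$.
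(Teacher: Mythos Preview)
Your proof is correct and follows exactly the approach of the paper: apply Proposition~\ref{prop:general-lb} with $S=\pi(T)$ for $\pi$ uniform in $\mathrm{Aut}(G)$, and use vertex-transitivity to see that $w_u=|T|/|N[T]|$ for every $u$. The paper's proof is the one-line version of what you wrote out in detail.
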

\begin{IEEEproof}
Apply Proposition~\ref{prop:general-lb} with $S$ as a translation of $T$ by an automorphism of $G$ chosen uniformly at random.
\end{IEEEproof}

\begin{theorem}
\label{thm:vt-ach}
Let $G$ be a vertex-transitive graph.
Then
\[
R^*(G,\delta) \geq \log \alpha(G) + R_{GV}\left(\frac{|V(G)|}{\alpha(G)},\delta\right).
\]
\end{theorem}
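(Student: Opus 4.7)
The plan is to apply Corollary \ref{cor:vt} directly to the graph $G(n,\delta n)$, using as the seed independent set $T$ the $n$-fold Cartesian product $I^n$ of a maximum independent set $I \subseteq V(G)$.

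First I would check the two prerequisites. (i) $G(n,\delta n)$ is vertex-transitive: the coordinate-wise action of $\mathrm{Aut}(G)^n$ on $V(G)^n$, composed with arbitrary permutations of coordinates, preserves the additive semimetric $d(x,x')=\sum_j d(x_j,x_j')$, so it acts transitively on $V(G(n,\delta n))$. (ii) $T = I^n$ is independent in $G(n,\delta n)$: for any two distinct $y,y'\in I^n$ there is some coordinate where $y_j\neq y_j'$ but both lie in the independent set $I$, so $d(y_j,y_j')=\infty$, giving $d(y,y')=\infty>\delta n$. Clearly $|T|=\alpha(G)^n$.

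Next I would upper-bound $|N[T]|$ in $G(n,\delta n)$. For $x\in V(G)^n$,
\[
d(x,T)=\min_{y\in I^n}\sum_j d(x_j,y_j)=\sum_j d(x_j,I),
\]
and each coordinate term is $0$ if $x_j\in I$, $1$ if $x_j\in N(I)\setminus I$, and $\infty$ otherwise. Hence $x\in N[T]$ forces every $x_j$ to lie in $N[I]$ and forces at most $\lfloor\delta n\rfloor$ coordinates to lie outside $I$. With $q=|V(G)|/\alpha(G)$, so that $|V(G)|-\alpha(G)=(q-1)\alpha(G)$, this gives
\[
|N[T]|\;\le\;\sum_{k=0}^{\lfloor\delta n\rfloor}\binom{n}{k}\alpha(G)^{n-k}\bigl(|V(G)|-\alpha(G)\bigr)^{k}=\alpha(G)^{n}\sum_{k=0}^{\lfloor\delta n\rfloor}\binom{n}{k}(q-1)^k.
\]

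Plugging into Corollary \ref{cor:vt},
\[
\alpha(G(n,\delta n))\;\ge\;\frac{|V(G)|^n\,\alpha(G)^n}{|N[T]|}\;\ge\;\frac{|V(G)|^n}{\sum_{k\le\delta n}\binom{n}{k}(q-1)^k}.
\]
Taking $\tfrac1n\log$ and letting $n\to\infty$, the denominator contributes $H_q(\delta)$ (this asymptotic continues to hold for the non-integer values of $q$ that arise here, since $H_q$ is defined by the same formula), and the numerator contributes $\log|V(G)|=\log\alpha(G)+\log q$, yielding the claimed bound $\log\alpha(G)+R_{GV}(q,\delta)$.

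The only delicate point, and the step I would be most careful about, is the volume estimate for non-integer $q$: I would verify that $\tfrac1n\log\sum_{k\le\delta n}\binom{n}{k}(q-1)^k\to H_q(\delta)$ for all real $q>1$ and $\delta\le 1-1/q$, which follows from a standard Stirling/large-deviations argument since the formula $H_q(x)=x\log(q-1)-x\log x-(1-x)\log(1-x)$ makes sense regardless of whether $q$ is an integer. Note that I do not need $I$ to be a dominating set: if $N[I]\subsetneq V(G)$ then $|N[T]|$ is only smaller, which strengthens the bound.
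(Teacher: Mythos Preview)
Your argument is essentially identical to the paper's: apply Corollary~\ref{cor:vt} to $G(n,\delta n)$ with $T=I^n$ for a maximum independent set $I$, bound $|N[T]|$ by $\alpha(G)^n\sum_{k\le \delta n}\binom{n}{k}(q-1)^k$, and pass to rates. You are in fact slightly more careful than the paper, which writes the neighborhood size as an equality (implicitly assuming $N[I]=V(G)$); your observation that only the upper bound is needed, and that $N[I]\subsetneq V(G)$ only helps, is the right way to phrase it.
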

\begin{IEEEproof}
Let $S$ be a maximum independent set in $G$.
Then $S^n$ is independent in $G(n,d)$ and $|N[S^n]| = \sum_{i=0}^d \binom{n}{i}(|V(G)| - |S|)^i |S|^{n-i}$.
$G(n,d)$ is vertex-transitive, so by Corollary~\ref{cor:vt}
\[
\alpha(G(n,d)) \geq \frac{|V(G)|^n}{\sum_{i=0}^d \binom{n}{i}\left(\frac{|V(G)|}{|S|} - 1\right)^i }\,.
\]
Rewriting this inequality in terms of rates gives the claim.
\end{IEEEproof}

\section{Converse bound}
\label{sec:converse}
\begin{theorem} 
Let $G$ be vertex-transitive, be edge-transitive, and have at least one edge. 
Then 
\begin{equation}
R^*(G, \delta) \leq \log \theta_L(G) + R_{LP1}\left(\frac{|V(G)|}{\theta_L(G)},\, \delta\right)\,,
\end{equation}
where $R_{LP1}(q,\delta)$ was defined in~\eqref{eq:rlp1}.
\end{theorem}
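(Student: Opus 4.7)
The plan is to reduce, via the Lovász sandwich inequality $\alpha(H)\le\theta_L(H)$, to upper-bounding $\theta_L(G(n,\delta n))$, and then exploit the symmetries of $G(n,\delta n)$ so that the SDP collapses into an MRRW-style LP on a single scalar ``Hamming'' variable.

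First I would analyze the orbit structure of $\mathrm{Aut}(G(n,\delta n))$ on ordered vertex pairs. Since $G$ is vertex- and edge-transitive, the group $\mathrm{Aut}(G)^n\rtimes S_n$ acts on $V(G)^n$, and two ordered pairs lie in a common orbit iff they share the triple $(n_=,n_\sim,n_\circ)$ counting coordinates where the two vertices are equal, $G$-adjacent, and nonadjacent-and-distinct respectively. Averaging any feasible dual matrix for $\theta_L(G(n,\delta n))$ over this group, I may assume it is constant on each orbit and thus encoded by coefficients $\{m_{n_=,n_\sim,n_\circ}\}$; the zero-on-edges constraint of $\theta_L$ then rules out all pairs at semi-distance at most $\delta n$, exactly mirroring the Delsarte setup for $q$-ary Hamming codes.

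Second I would build the feasible solution as a tensor product of a per-coordinate Lovász certificate with an MRRW polynomial along the Hamming direction. For vertex- and edge-transitive $G$ the theta-optimal dual matrix takes the Hoffman form $\alpha I+\beta J+\gamma A_G$, and its $n$-fold tensor decomposes the spectrum of the global SDP into two decoupled directions: one contributing the factor $\theta_L(G)^n$, and an orthogonal ``Krawtchouk-like'' scalar variable whose effective alphabet parameter is $q_{\mathrm{eff}}=|V(G)|/\theta_L(G)$. Plugging in the MRRW quadratic-times-Krawtchouk polynomial along this scalar direction—exactly as in the original derivation of $R_{LP1}$—produces the second factor $R_{LP1}(q_{\mathrm{eff}},\delta)$, and multiplying the two contributions yields the claimed bound after taking $\tfrac{1}{n}\log$ and $\limsup_n$.

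The main obstacle will be showing that the MRRW positivity argument survives the passage to the non-integer alphabet $q_{\mathrm{eff}}$. In the classical proof, positivity of the relevant Krawtchouk expansion coefficients is a combinatorial identity valid for integer $q$; here one must verify that the ``fractional-$q$'' versions of these coefficients, produced by the spectral decomposition of the Hoffman certificate $\alpha I+\beta J+\gamma A_G$ tensored over the coordinates, retain the positivity needed to declare the constructed dual matrix PSD. Once this is established, the optimization over the polynomial degree parameter proceeds verbatim as in McEliece et al., and the proof concludes.
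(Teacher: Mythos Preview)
Your overall strategy---bound $\alpha$ by a theta-function, restrict the dual SDP to a symmetry-adapted ansatz, and solve the resulting one-dimensional problem with the MRRW polynomial---matches the paper. But several of your concrete steps are wrong or miss the key idea.

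First, your orbit claim is false: edge-transitivity of $G$ does \emph{not} make $\mathrm{Aut}(G)$ transitive on non-adjacent distinct pairs, so ordered pairs in $V(G)^n$ are generally \emph{not} classified by the triple $(n_=,n_\sim,n_\circ)$ (e.g.\ for $C_6$ there are two orbits of non-adjacent pairs). The paper never attempts this reduction. Instead it works with Schrijver's $\theta_S$ (constraint $T_{v,v'}\le 0$ on non-edges of $G(n,\delta n)$, not $=0$) and takes the explicit ansatz
\[
T \;=\; D^{\otimes n}\odot\Bigl(\sum_{\ell=0}^n \hat h_\ell\,\Pi_\ell\Bigr),\qquad \Pi_\ell=\sum_{\|z\|_0=\ell}\bigotimes_i P_{z_i},\ z_i\in\{0,m\},
\]
where $D=\frac{g}{\lambda_0-\lambda_m}(A_G-\lambda_m I)$ is the Hoffman-form Lov\'asz optimum and $P_0,P_m$ are the projectors onto the constant vectors and the $\lambda_m$-eigenspace. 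The Hadamard factor $D^{\otimes n}$ kills every pair with $d(v,v')=\infty$ outright, so the non-edge structure among non-adjacent $G$-pairs is never seen---this is exactly what your orbit argument cannot supply. Second, positive-semidefiniteness of $T$ is \emph{not} a Krawtchouk-coefficient issue: it is immediate from the Schur product theorem (each $D^{\otimes n}\odot P_z$ is PSD, $\hat h_\ell\ge 0$), and the constraint $T\succeq \one\one^T$ reduces to a single scalar inequality because $\one$ is an eigenvector of every $D^{\otimes n}\odot P_z$ (using $\mathrm{Im}\,P_m=\ker D$). What \emph{does} require care is that for $\delta n<d(v,v')<\infty$ the entry $(\Pi_\ell)_{v,v'}$ equals $g^{-n}c^\ell K_\ell(d(v,v'))$ with Krawtchouk parameter $q'=1-\lambda_0/\lambda_m=g/\theta_L(G)$; here edge-transitivity is used only to ensure $(P_m)_{v,v'}$ is constant on $E(G)$. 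The MRRW choice $H(x)=\frac{1}{x-a}\bigl(K_t(a)K_{t+1}(x)-K_{t+1}(a)K_t(x)\bigr)^2$ then gives $H(x)\le 0$ on $[\delta n,n]$; the analysis goes through verbatim for non-integer $q'$ since only analytic properties of Krawtchouk roots are used. Finally, note that your use of $\theta_L$ with the equality constraint would fail here: MRRW delivers $H(x)\le 0$, not $H(x)=0$, which is why the paper passes to $\theta_S$.
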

\begin{remark}
For every edge-transitive $G$ we have~\cite{lovasz-1979}
\begin{equation}\label{eq:dtt0}
    \theta_L(G)=\frac{|V(G)|}{1-\frac{\lambda_0}{\lambda_m}}\,,
\end{equation}where $\lambda_0$ and $\lambda_m$ are the largest and
the smallest eigenvalues of the adiacency matrix of $G$, respectively.
\end{remark}
\begin{IEEEproof} 
Let $g=|V(G)|$. 
To bound $\alpha(G(n,\delta n))$ we use Schrijver-Delsarte's method \cite{schrijver-1979}.
For real symmetric matrices $T$ and $T'$, write $T \pgeq T'$ when $T - T'$ is positive semidefinite.
Let $\one$ be the column vector of ones.
For any graph $\Gamma$ we have
\[ \alpha(\Gamma) \le \theta_S(\Gamma) \]
where Schrijver's $\theta$-function $\theta_S(\Gamma)$ is defined as
\begin{equation}\label{eq:dtt12}
 \min\{\max_{v} T_{v,v}: T\pgeq \one\one^T, T_{v,v'} \le 0 \quad \forall (v,v') \in E(\overline \Gamma)\}\,. 
\end{equation}
Note that if the condition $T_{v,v'} \le 0$ is replaced with $T_{v,v'} = 0$ we get an alternative definition of the Lovasz'
$\theta_L(\Gamma)$. Denote by $D$ the $g\times g$ matrix achieving Lovasz's $\theta_L(G)$. 
For edge-transitive case it is known that 
\begin{equation}\label{eq:dtt}
  D = \frac{g}{ \lambda_0 - \lambda_m} (A_G - \lambda_m I)\,, \qquad \theta_L = \frac{-\lambda_m g}{\lambda_0-\lambda_m}
\end{equation}
where we enumerated $\lambda_0 \ge \cdots \ge \lambda_m$ the eigenvalues of adjacency matrix $A_G$. 
Note that $\tr A = 0$ but $A$ has some nonzero entries, so $\lambda_0 > 0$, $\lambda_m < 0$, and $D$ is entrywise non-negative.
Let $P_0$ and $P_m$ be orthogonal projectors on the space of constant functions and $\lambda_m$-eigenspace respectively. 
Thus $P_0 = \frac{1}{g}\one\one^T$. 

We will bound $\theta_S(G(n,\delta n))$ by optimizing over the restricted set of $T$'s in~\eqref{eq:dtt12}. Namely, 
for any $z \in \{0,m\}^n$ define 
\begin{align} P_z &\eqdef \bigotimes_{i=1}^n P_{z_i}\\
\Pi_\ell &\eqdef \sum_{z \in \{0,m\}^n: \|z\|_0 = \ell} P_z\,, \quad \ell=0,\ldots,n
\end{align}
where $\|\cdot\|_0$ is the Hamming weight. We will search $T$-assignments in the form 
\[ T = D^{\otimes n} \odot \left( \sum_{\ell=0}^n \hat h_\ell \Pi_\ell \right)\]
with $\hat h_{\ell}\ge 0$ and with $\odot$ denoting the Hadamard (entry-wise) product. 
We have to express the two conditions on $T$ from~\eqref{eq:dtt12} in terms of the coefficients $\hat h_\ell$. 

First, we consider the condition $T \pgeq \one\one^T$.
\begin{enumerate}
\item Since $D\odot P_0 = \frac{1}{g}D$, we have $(D \odot P_0) \one = \one$.
\item Note that 
\begin{equation}\label{eq:pmd}
    \img P_m = \ker D
\end{equation}
implying that $\tr (P_m D) =0$ and thus $\one^T(D\odot P_m) \one = 0\,$.    
So $\one$ is in the kernel of $D\odot P_m$. 
Similarly, $\one$ is in the kernel of $D^{\otimes n}\odot P_z$ for $\|z\|_0 > 0$.
\item Consequently, $\one$ is an eigenvector of $D^{\otimes n}\odot P_z$ for any $z$ and $\one$ is an eigenvector of $T$ for any choice of $\{\hat h_\ell\}$ and the eigenvalue of $\one$ is $\hat{h}_0$.
\item
Since Hadamard-product preserves positive-semidefiniteness, it is clear that $T \pgeq 0$.
Because $\one$ is an eigenvector of $T$, the condition $T \pgeq \one\one^T$ is equivalent to 
\begin{equation}\label{eq:dtt8}
g \hat{h}_0 = \one^T T \one \ge \one^T(\one\one^T)\one = g^{2n} 
\end{equation}
\end{enumerate}

Next, consider the condition $T_{v,v'} \le 0 \quad \forall (v,v') \in E(\overline{G(n,\delta n)})$, i.e. all $(v,v')$ such that $d(v,v') > \delta n$.
We have 
\begin{align*}
(D^{\otimes n})_{v,v'} =   0, & \quad d(v,v')=\infty \\
(D^{\otimes n})_{v,v'} \ge 0, & \quad d(v,v') < \infty\,.
\end{align*}
Thus we need
\[
\left( \sum_{\ell=0}^n \hat h_\ell \Pi_\ell \right)_{v,v'} \leq 0
\]
for all $(v,v')$ such that $\delta n < d(v,v') < \infty$.

Denote $d \eqdef \tr P_m$ (the dimension of $\lambda_m$-eigenspace) and $c \eqdef -g (P_m)_{v,v'}$ for any pair of adjacent vertices $(v,v')$. 
Note that, by edge-transitivity, $c$ does not depend on the choice of pair of vertices%
\footnote{This is the key reason for requiring edge-transitivity. For non-edge-transitive graphs, e.g. the complement of the Kneser graph $\overline K_{7,3}$ or $(C_5)^2$, we actually do not have the constancy of $c$ on the edge-set.}.
We can relate $c/d$ to spectrum of $A_G$ by using $\tr(P_m D)=0$:
\begin{equation}\label{eq:dtt7}
  d \lambda_m = \tr P_m^* A_G = - c \frac{|E(G)|}{g} = -c\lambda_0\,. 
\end{equation}    
In particular, $c>0$.

We now let $d(v,v')=d_0 < \infty$ and notice that this implies that for every $i \in [n]$, either $\{v_i, v_i'\} \in E(G)$
or $v_i=v_i'$. Therefore, under restriction of finite distance we have
\[ (P_m)_{v_i,v_i'} = \frac{1}{g} \left(d 1\{v_i=v_i'\} - c 1\{v_i \neq v_i'\}\right)\,.\]
Consequently, 
\[ (P_z)_{v,v'} = \frac{1}{g^n} (-c)^{b} d^{\|z\|_0 - b}\,, \,\, b=|\{i: v_i\neq v_i', z_i=m\}|\,.\]
Finally, summing over all $z$ with Hamming weight $\ell$ we get 
\[ (\Pi_\ell)_{v,v'} = \frac{1}{g^n}c^{\ell} K_\ell(d(v,v'))\,, \]
where we introduce Krawtchouk polynomials
\[ K_\ell(x) \eqdef \sum_{j=0}^\ell \binom{x}{j} \binom{n-x}{\ell - j} (-1)^j (q'-1)^{\ell-j}\,, \]
and $q' = 1+\frac{d}{c} = 1- \frac{\lambda_0}{\lambda_m} = \frac{g}{\theta_L(G)}$ by~\eqref{eq:dtt0}
and~\eqref{eq:dtt7}.

Thus $T_{v,v'} \le 0 $ for $\delta n < d(v,v')$ is equivalent to
\begin{equation}\label{eq:dtt9}
  H(x) \le 0 \qquad \forall x \in \mathbb{Z} \cap [\delta n, n] 
\end{equation}
where we introduce
\[ H(x) = \sum_{\ell=0}^n \hat H_\ell K_\ell(x)\,, \quad \hat H_\ell \eqdef \frac{1}{g^n} c^\ell \hat h_\ell\,. \]

Relaxing the constraint in~\eqref{eq:dtt9} to $H(x) \le 0$ on the interval $[\delta n,n]$ we get the problem:
\begin{equation}\label{eq:htoLPq'}
    A_{LP1}(n,\delta n) \eqdef \min\left\{\frac{H(0)}{\hat H_0}: \hat H_\ell \ge 0, H(x) \le 0 \quad
\forall x\in[\delta n, n]\right\}\,. 
\end{equation}
Since $D_{v,v} = \theta_L(G)$ the overall bound becomes:
\begin{equation}    \alpha(G(n,\delta n)) \le \theta_L(G)^n  A_{LP1}(n,\delta n)\,.
\end{equation}
The minimization of \eqref{eq:htoLPq'} is what is known as the first linear programming problem for the Hamming space,
though with a non-integer parameter $q'$.
Although exact asymptotics of~\eqref{eq:htoLPq'} is hereto unknown even in the binary case,
cf.~\cite{navon2005delsarte}, we can use the standard MRRW choice of the polynomial $H(x) = \frac{1}{x-a}(K_t(a)
K_{t+1}(x)-K_{t+1}(a)K_t(x))^2$, see~\cite{mceliece-et-al-1977} for the choice of $a$ and $t$. Their arguments 
can be applied verbatim for non-integer values of $q'$ (see also \cite{ismail-simeonov-1998}
for the position of the roots of $K_\ell(x)$) and it implies
\[ A_{LP1}(n, \delta n) \le \exp\{n R_{LP1}(q', \delta) + o(n)\}\,,\]
and the claim of the theorem follows.
\end{IEEEproof}

\section{Relations between graphs}
\label{sec:graph-stuff}

In this section we summarize some of the methods that can be useful for extending the previous basic results to other
graphs (possibly lacking symmetries).

\subsection{Bounds from graph homomorphisms}
A function $f : V(G) \to V(H)$ is graph homomorphism from $G$ to $H$ if $\{u, v\} \in E(G)$ implies $\{f(u), f(v)\} \in E(H)$.
We will write $f : G \to H$ to indicate that $f$ is a homomorphism, or just $G \to H$ to indicate that a homomorphism exists.

\begin{proposition}
\label{prop:lift}
If there is some $f : G \to H$, then $f^{\otimes n} : G(n,d) \to H(n,d)$. 
If additionally $H$ is vertex-transitive then
\[ R^*(G, \delta) \ge \log \frac{|V(G)|}{|V(H)|} + R^*(H,\delta)\,.\]
\end{proposition}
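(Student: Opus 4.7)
The plan is to first verify the homomorphism claim by a direct unwinding of the semimetric, then obtain the rate bound by pulling back a maximum independent set in $H(n,d)$ through $f^{\otimes n}$ and averaging over a transitive group action.

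For the homomorphism part, I would show that $f$ itself contracts the one-step semimetric, $d_H(f(u),f(v))\le d_G(u,v)$ for all $u,v\in V(G)$, by a three-case check: $u=v$ is trivial; $\{u,v\}\in E(G)$ forces $\{f(u),f(v)\}\in E(H)$ by the homomorphism property, and loop-freeness of $H$ gives $f(u)\ne f(v)$, so both distances equal $1$; the remaining case $d_G(u,v)=\infty$ is vacuous. Summing over coordinates yields $d_H(f^{\otimes n}(x),f^{\otimes n}(x'))\le d_G(x,x')$ for all $x,x'$, so any edge of $G(n,d)$ maps to an edge of $H(n,d)$.

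For the inequality on $R^*$, the key fact I would invoke is that preimages of independent sets under graph homomorphisms are independent: if $x,x'\in(f^{\otimes n})^{-1}(I)$ were $G(n,d)$-adjacent, the first part would place two adjacent, hence distinct, elements of $I$ together in an independent set, a contradiction. Next, I would note that vertex-transitivity of $H$ lifts to $H(n,d)$ because the coordinatewise action of $\mathrm{Aut}(H)^n$ preserves the additive semimetric and is already transitive on $V(H)^n$. Fixing a maximum independent set $I\subseteq V(H(n,d))$ of size $\alpha(H(n,d))$, I would average the preimage size over a uniform random $\sigma\in\mathrm{Aut}(H)^n$: orbit–stabilizer gives $\PR{f^{\otimes n}(x)\in\sigma I}=|I|/|V(H)|^n$ for each fixed $x\in V(G)^n$, hence
$$\EE\bigl|(f^{\otimes n})^{-1}(\sigma I)\bigr| \;=\; \alpha(H(n,d))\left(\frac{|V(G)|}{|V(H)|}\right)^{\!n}.$$
Some $\sigma$ attains this value, producing an independent set in $G(n,d)$ of at least that size; taking $\tfrac{1}{n}\log$ and then $\limsup$ delivers the stated inequality.

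The only delicate point in the argument is the loop-free convention for homomorphisms, which is exactly what prevents $f^{\otimes n}$ from collapsing an edge and thereby ensures that independent-set pullbacks are still independent. Beyond this, everything is routine bookkeeping, and I do not anticipate any substantive obstacle.
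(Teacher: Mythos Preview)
Your argument is correct and matches the paper's proof; the only difference is that where the paper invokes the No-Homomorphism Lemma of Albertson--Collins as a black box to get $\alpha(G(n,d))/|V(G)|^n \ge \alpha(H(n,d))/|V(H)|^n$, you supply its standard proof in place via averaging the pulled-back independent set over $\mathrm{Aut}(H)^n$. One cosmetic point: your case analysis actually gives $d_H(f(u),f(v)) = d_G(u,v)$ whenever the latter is finite (not merely $\le$), which is what guarantees that an edge of $G(n,d)$ does not collapse under $f^{\otimes n}$---you correctly flag this at the end but could state the equality up front.
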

\begin{IEEEproof}
For $u,v \in V(G)$, if $d_G(u,v) < \infty$ then $d_H(f(u), f(v)) = d_G(u,v)$.
This property extends to the semimetrics on $V(G)^n$ and $V(H)^n$.

If $H$ is vertex-transitive, $H(n,d)$ is as well.
Because $G(n,d) \to H(n,d)$ and $H(n,d)$ is vertex-transitive, we have
\[\frac{\alpha(G(n,d))}{|V(G)|^n} \geq \frac{\alpha(H(n,d))}{|V(H)|^n}\] by the No-Homomorphism Lemma~\cite{AC85}.
Rewriting this inequality in terms of rates gives the claim.
\end{IEEEproof}

In particular, for any $c$-colorable graph $G$ we have, by applying the previous proposition to $G\to K_c$, that
\begin{equation}\label{eq:ach1}
    R^*(G, \delta) \ge \log \frac{|V(G)|}{c} + R^*(K_c, \delta)\,.
\end{equation}
We may further lower-bound $R^*(K_c,\delta)$ by the GV-bound $R_{GV}(c, \delta)$.
This may not be the best one known, though!
It turns out that this latter bound can be improved, as next section shows, by replacing coloring with fractional coloring.  

\subsection{Bounds from fractional chromatic number}

For vertex-transitive graphs we have $\alpha(G)\chi^*(G) = |V(G)|$ and thus we may restate Theorem~\ref{thm:vt-ach} as
\begin{equation}\label{eq:ach2}
R^*(G,\delta) \ge \log \frac{|V(G)|}{\chi^*(G)} + R_{GV}(\chi^*(G), \delta)\,.
\end{equation}
It turns out~\eqref{eq:ach2} holds even without vertex-transitivity.
\begin{proposition}
For any graph $G$ the bound~\eqref{eq:ach2} holds.
\end{proposition}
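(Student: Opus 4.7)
My plan is to reduce the general case to the vertex-transitive case of Theorem~\ref{thm:vt-ach} via a graph homomorphism into a Kneser graph. The starting point is the classical characterization of $\chi^*(G)$ as the minimum ratio $c/a$ over pairs of positive integers $(c,a)$ for which $G$ admits a $c{:}a$-coloring, that is, an assignment of an $a$-element subset of $[c]$ to each vertex of $G$ so that adjacent vertices receive disjoint subsets. Since $\chi^*(G)$ is the value of a finite linear program with rational data, this minimum is attained at some integer pair $(c,a)$.

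The key observation is that a $c{:}a$-coloring is literally a graph homomorphism $f : G \to K_{c,a}$ into the Kneser graph. This graph is vertex-transitive with $|V(K_{c,a})| = \binom{c}{a}$ and $\alpha(K_{c,a}) = \binom{c-1}{a-1}$ by Erd\H{o}s--Ko--Rado, so its ratio $|V(K_{c,a})|/\alpha(K_{c,a})$ equals $c/a$, matching the quantity that appears in Theorem~\ref{thm:vt-ach}.

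Chaining Proposition~\ref{prop:lift} (with $H = K_{c,a}$) and Theorem~\ref{thm:vt-ach} applied to $K_{c,a}$, I would obtain
\[
R^*(G,\delta) \ge \log \frac{|V(G)|}{\binom{c}{a}} + R^*(K_{c,a},\delta) \ge \log \frac{|V(G)|}{\binom{c}{a}} + \log \binom{c-1}{a-1} + R_{GV}(c/a, \delta),
\]
and the two binomial factors telescope to $a/c$. Substituting the minimizing ratio $c/a = \chi^*(G)$ then produces precisely~\eqref{eq:ach2}.

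I do not anticipate serious obstacles; the proof is essentially a bookkeeping exercise once the right intermediate graph $K_{c,a}$ is identified. The only subtlety worth flagging is that the infimum defining $\chi^*(G)$ is achieved at a rational $c/a$ for finite $G$, which follows immediately from LP duality and so avoids any need for a continuity argument in the first argument of $R_{GV}$. Note that this reasoning also explains, at a conceptual level, why non-integer values for the alphabet size appear naturally in~\eqref{eq:LB-general}: they arise as ratios $|V|/\alpha$ of vertex-transitive intermediate targets for the homomorphism.
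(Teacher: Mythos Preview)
Your proposal is correct and follows essentially the same route as the paper: both arguments pick a Kneser graph $K_{c,a}$ with $c/a=\chi^*(G)$ as the homomorphism target, then chain Proposition~\ref{prop:lift} with Theorem~\ref{thm:vt-ach} and telescope the resulting logarithms. The paper writes the parameters as $(bp,bq)$ to emphasize that one may need to scale the reduced fraction $p/q=\chi^*(G)$ before a $c{:}a$-coloring is guaranteed to exist, which is exactly the subtlety you flagged and handled by not assuming $(c,a)$ coprime.
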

\begin{IEEEproof}
If $\chi^*(G) = p/q$, then for some positive integer $b$, $G \to K_{bp,bq}$, where $K_{bp,bq}$ is a Kneser graph as defined in Section~\ref{sec:examples} 
\cite{godsil_algebraic_2013}.
$K_{bp,bq}$ is vertex-transitive and $\chi^*(K_{bp,bq}) = \frac{p}{q}$.
Thus from Proposition~\ref{prop:lift} and \eqref{eq:ach2},
\begin{IEEEeqnarray*}{rCl}
R^*(G,\delta)
&\geq& \log \frac{|V(G)|}{|K_{bp,bq}|} + R^*(K_{bp,bq},\delta)\\
&\geq& \log \frac{|V(G)|}{|K_{bp,bq}|} + \log \frac{|K_{bp,bq}|}{p/q} + R_{GV}\left(\frac{p}{q},\delta\right)\,.\, 
\end{IEEEeqnarray*}
\end{IEEEproof}

\subsection{Bounds from powers of graphs}
\begin{proposition}
\label{prop:using-powers}
\begin{equation*}
\frac{1}{r} R^*(G^r, r\delta) \leq R^*(G, \delta) \leq \frac{1}{r} R^*(G^r, \delta)
\end{equation*}
\end{proposition}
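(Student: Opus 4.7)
The plan is to use the natural bijection $V(G^r)^n \cong V(G)^{rn}$ that reads a length-$rn$ sequence over $V(G)$ as $n$ consecutive blocks of length $r$ (each block being a vertex of $G^r$), and to compare the two induced semimetrics block-by-block. For each block $X,Y \in V(G)^r$, the single-block $G^r$-distance $d_{G^r}(X,Y)$ is $0$ if $X=Y$, $\infty$ if $X,Y$ disagree on a non-edge of $G$ at some coordinate, and $1$ otherwise, while the corresponding sum $\sum_i d_G(X_i,Y_i)$ is respectively $0$, $\infty$, or an integer in $\{1,\ldots,r\}$. Summing over the $n$ blocks shows the two distances on $V(G)^{rn}=V(G^r)^n$ are simultaneously infinite, and when both are finite they satisfy
\[
d_{G^r}^n(x,y) \;\leq\; d_G^{rn}(x,y) \;\leq\; r\cdot d_{G^r}^n(x,y).
\]

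Next I would extract two independent-set inclusions at matched scales. By the right half of the sandwich, any independent set in $G(rn, r\delta n)$ has all $G^r$-distances exceeding $\delta n$, so it is also independent in $(G^r)(n, \delta n)$. By the left half, any independent set in $(G^r)(n, r\delta n)$ has all $G$-distances exceeding $r\delta n$, so it is independent in $G(rn, r\delta n)$. Together these give
\[
\alpha((G^r)(n, r\delta n)) \;\leq\; \alpha(G(rn, r\delta n)) \;\leq\; \alpha((G^r)(n, \delta n)).
\]
Dividing by $rn$, taking logarithms and passing to $\limsup_n$ yields the two claimed inequalities along the subsequence $m=rn$, using $r\delta n = \delta\cdot rn$.

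The remaining step is to argue that the $\limsup$ along multiples of $r$ coincides with the full $\limsup$ defining $R^*(G,\delta)$. I would justify this by a short continuity estimate: one has $\alpha(G(n+1,d)) \geq \alpha(G(n,d))$ by padding codewords with a fixed symbol, while a pigeonhole projection on the extra coordinate gives the matching bound $\alpha(G(n,d-1)) \geq \alpha(G(n+1,d))/|V(G)|$. Hence $\frac{1}{n}\log\alpha(G(n,\lfloor\delta n\rfloor))$ changes by only $O(1/n)$ under unit shifts of $n$, so its $\limsup$ is the same along any cofinal arithmetic subsequence. The main obstacle in writing this out cleanly is exactly this continuity bookkeeping; the semimetric sandwich and the resulting independent-set inclusions are immediate from the definitions.
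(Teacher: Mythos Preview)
Your proposal is correct and takes essentially the same approach as the paper: the paper's entire proof is the edge-set chain $E(G^r(n,d)) \subseteq E(G(rn,rd)) \subseteq E(G^r(n,rd))$ on the common vertex set $V(G)^{rn}$, which is exactly your distance sandwich $d_{G^r}^n \le d_G^{rn} \le r\,d_{G^r}^n$ rephrased, yielding the same inequality $\alpha(G^r(n,rd)) \le \alpha(G(rn,rd)) \le \alpha(G^r(n,d))$. Your extra continuity bookkeeping for passing from the subsequence $m=rn$ to the full $\limsup$ is a detail the paper simply omits; note that only the upper inequality $R^*(G,\delta)\le \tfrac{1}{r}R^*(G^r,\delta)$ actually needs it, since for the lower inequality the subsequence $\limsup$ is automatically $\le$ the full one.
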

\begin{IEEEproof}
These three graphs can all have vertex set $V(G)^{rn}$. 
We have $E(G^r(n,d)) \subseteq E(G(rn,rd)) \subseteq E(G^r(n,rd))$ and 
$\alpha(G^r(n,rd)) \leq \alpha(G(rn,rd)) \leq \alpha(G^r(n,d))$.
\end{IEEEproof}

A useful application of Proposition \ref{prop:using-powers} is the lower bound we presented for $C_5$ in equation
\eqref{eq:LB-C_5^2}. In this particular case, we exploit the fact that $C_5^2$ is $5$-colorable to use equation
\eqref{eq:ach1} on $C_5^2$ and then Proposition \ref{prop:using-powers} to bound $R^*(C_5,\delta)$.

\subsection{Clique covers}
In some special cases, we can show that distinct graphs have essentially identical rate functions.
\begin{proposition}
\label{prop:tight-cover}
Let $G$ be a graph with $\omega(G) = \chi(G) = c$, and $\theta^*(G)\omega(G) = |V(G)|$. 
Then $\alpha(G) = \theta^*(G)$ and 
\begin{equation*}
R^*(G,\delta) = \log \alpha(G) + R^*(K_c, \delta).
\end{equation*}
\end{proposition}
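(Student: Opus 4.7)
The plan is to first pin down $\alpha(G) = \theta^*(G) = |V(G)|/c$, then prove the two directions of the rate formula separately. For the identity, I would chain the standard inequalities $|V(G)|/\chi(G) \le \alpha(G) \le \theta^*(G)$ — the first from averaging class sizes in any proper $c$-coloring (every class has size at most $\alpha(G)$ and they sum to $|V(G)|$), the second from LP duality between fractional colorings of $\overline{G}$ and independent sets of $G$ — against the given equality $\theta^*(G)\omega(G) = |V(G)|$ together with $\chi(G) = \omega(G) = c$. This forces all three quantities to equal $|V(G)|/c$. The lower bound $R^*(G,\delta) \ge \log \alpha(G) + R^*(K_c,\delta)$ is then immediate from Proposition~\ref{prop:lift}: the $c$-coloring of $G$ is a homomorphism $G \to K_c$, $K_c$ is vertex-transitive, and $|V(G)|/c = \alpha(G)$.

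For the upper bound, the tempting idea of pushing an independent set $I \subseteq V(G(n,\delta n))$ forward through a $c$-coloring map $\pi\colon V(G) \to [c]$ does \emph{not} close on its own, because two codewords $x, x' \in I$ with $d_G(x,x') = \infty$ (e.g. obtained by making some coordinate a second vertex of the same color class) can have their $\pi$-images at small Hamming distance in $[c]^n$, so $\pi^{\otimes n}(I)$ need not be independent in $K_c(n,\delta n)$. Instead I would exploit the fractional clique cover: the hypothesis $\theta^*(G)\omega(G) = |V(G)|$ forces any optimal fractional cover $\{y_K\}$ to be supported on maximum cliques of size $c$, with $\sum_{K \ni v} y_K = 1$ for every $v$ and $\sum_K y_K = \theta^*(G)$. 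I would sample cliques $K_1,\ldots,K_n$ independently by drawing $K_i = K$ with probability $y_K/\theta^*(G)$, so that each vertex of $G$ lies in $K_i$ with probability exactly $1/\theta^*(G)$.

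The key structural observation is that within a product $K_1 \times \cdots \times K_n$ of maximum cliques, every pair of distinct sequences has finite $d_G$-distance equal to the Hamming distance, so the subgraph of $G(n,\delta n)$ induced on the product is isomorphic to $K_c(n,\delta n)$. Hence $|I \cap (K_1 \times \cdots \times K_n)| \le \alpha(K_c(n,\delta n))$ deterministically, while by independence across coordinates the expected size equals $|I|\,\theta^*(G)^{-n}$. Combining the two and using $\theta^*(G) = \alpha(G)$ yields $|I| \le \alpha(G)^n\,\alpha(K_c(n,\delta n))$, which gives the matching upper bound after taking logs and a limit. The main obstacle is recognizing that the naive colour-projection bound is not sharp, and that the fractional clique cover applied coordinate-wise via independent sampling is the right way to reduce an arbitrary large independent set in $G(n,\delta n)$ to a code in a genuine Hamming space.
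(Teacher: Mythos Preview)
Your proof is correct and matches the paper's approach essentially line for line: the paper establishes $\alpha(G)=\theta^*(G)$ by the same inequality chain, obtains the lower bound via the homomorphism $G\to K_c$, and for the upper bound uses an optimal fractional clique cover supported on maximum cliques together with the observation that each product $T\in\mats^n$ induces a copy of $K_c(n,d)$. The only cosmetic difference is that the paper writes the covering step as a weighted sum $\alpha(G(n,d))\le\sum_{T}\alpha(G(n,d)[T])\prod_i w_{T_i}$ rather than as an expectation over independently sampled cliques, but the two computations are identical.
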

\begin{IEEEproof}
For any graph $H$, $\frac{|V(H)|}{\chi(H)} \leq \alpha(H)$ because some color class is at least as large as the average and $\alpha(G) \leq \theta^*(H)$ by the pigeonhole principle.
If $H$ satisfies the conditions of this proposition, the inequalities must be tight.

Let $\mats$ be the set of maximal cliques in $G$.
There is a fractional clique covering of $G$ of weight $\theta^*(G)$.
This means that there is some weight vector $w \in \R_+^{\mats}$ such that $\one^Tw = \theta^*(G)$ and $\sum_{S:u \in S} w_S \geq 1$ for all $u \in V(G)$.
Because $\theta^*(G)\omega(G) = |V(G)|$, only cliques of size $\omega(G)$ are assigned nonzero weight.

For each $S \in \mats$, $G[S]$ (the subgraph of $G$ induced by $S$) is isomorphic to $K_c$.
For each $T =T_1 \times \ldots \times T_{|\mats|} \in \mats^n$, $G(n,d)[T] = K_c(n,d)$.
Thus
\begin{IEEEeqnarray}{rCl}
\alpha(G(n,d)) 
&\leq& \sum_{T \in \mats^n} \alpha(G(n,d)[T]) \prod_{i \in [n]} w_{T_i}\nonumber\\
&\leq& \alpha(K_c(n,d)) \sum_{T \in \mats^n} \prod_{i \in [n]} w_{T_i}\nonumber\\
&=& \alpha(K_c(n,d)) \left( \sum_{S \in \mats} w_S\right)^n\nonumber\\
&=& \theta^*(G)^n \alpha(K_c(n,d))\label{cover}
\end{IEEEeqnarray}
There is a homomorphism $G \to K_{\chi(G)}$, so from \eqref{eq:ach1} and \eqref{cover}, 
\begin{equation*}
\log \frac{|V(G)|}{\chi^*(G)} + R^*(K_c, \delta)
\leq R^*(G,\delta)
\leq \log \theta^*(G) + R^*(K_c, \delta)\,.
\end{equation*}
These bounds are both equal to $\log \alpha(G) + R^*(K_c, \delta)$.
\end{IEEEproof}

\subsection{Sum of cliques}
The simplicity of Proposition~\ref{prop:tight-cover} depends on the fact that all cliques involved are the same size.
When every optimal clique cover involves multiple sizes of cliques, complications ensue.
\begin{proposition}
\label{sec:unequal-cliques}
Let $G= a_1 K_1 + a_c K_c$ and let $q = \frac{a_c}{a_i}+1$.
Then
\[ 
R^*(G,\delta) = \log a_1 + \max_{0 \leq \lambda \leq 1} \left[ H_q(\lambda) + \lambda R^*\left(K_c,\frac{\delta}{\lambda}\right)\right].
\]
\end{proposition}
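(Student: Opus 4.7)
The plan is to derive an exact combinatorial decomposition of $\alpha(G(n,\delta n))$ from the disconnected structure of $G$, and then extract the rate formula by Laplace's method.

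First, I observe that for $u,v\in V(G)$ one has $d(u,v)<\infty$ iff either $u=v$ or $u,v$ lie in a common $K_c$-component, so $x,x'\in V(G)^n$ are at finite total distance iff at every coordinate $i$ either $x_i=x'_i$ or $x_i,x'_i$ lie in the same $K_c$-clique. I would encode this shared structure by a \emph{profile}: a subset $I\subseteq[n]$ of isolated coordinates, a choice of one of the $a_1$ isolated vertices at each $i\in I$, and a choice of one of the $a_c$ cliques at each $i\notin I$. Sequences with distinct profiles are at infinite distance, hence never adjacent in $G(n,\delta n)$; and sequences sharing a profile with $k=n-|I|$ clique coordinates form an induced subgraph of $G(n,\delta n)$ isomorphic to $K_c(k,\delta n)$. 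Since edges only occur within profile classes, independent sets decompose as unions of independent sets in each class, giving the exact identity
\[
\alpha(G(n,\delta n)) \;=\; \sum_{k=0}^{n}\binom{n}{k}\,a_1^{\,n-k}\,a_c^{\,k}\,\alpha\!\left(K_c(k,\delta n)\right).
\]

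Since this sum has only $n+1$ terms, its logarithm is within $\log(n+1)$ of the log of its largest term. Setting $k=\lfloor\lambda n\rfloor$ and letting $n\to\infty$, the three factors contribute asymptotically $H_2(\lambda)$, $(1-\lambda)\log a_1+\lambda\log a_c$, and $\lambda R^*(K_c,\delta/\lambda)$ respectively; the last is by the definition of $R^*(K_c,\cdot)$ applied to codes of length $\lambda n$ and distance $(\delta/\lambda)\cdot\lambda n$. Using the identity $H_q(\lambda)=H_2(\lambda)+\lambda\log(q-1)$ together with $q-1=a_c/a_1$ regroups the first two contributions as $\log a_1+H_q(\lambda)$, and maximizing over $\lambda\in[0,1]$ yields the claimed formula.

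The main technicality is exchanging the finite maximum over integer $k$ with the $n\to\infty$ limit. For the upper bound, the $\limsup$ definition of $R^*(K_c,\cdot)$ gives, for every $\epsilon>0$, the eventual estimate $\alpha(K_c(n',\delta' n'))\le\exp(n'(R^*(K_c,\delta')+\epsilon))$, which applies uniformly to all $n'=\lfloor\lambda n\rfloor$ large enough; near $\lambda=0$ the $K_c$-factor is negligible since $k=o(n)$. For the lower bound, I would first treat rational $\lambda$ by choosing $n$ along the subsequence realizing $R^*(K_c,\delta/\lambda)$ and then extend to irrational $\lambda$ using continuity of $\lambda\mapsto\lambda R^*(K_c,\delta/\lambda)$ (which is monotone and vanishes for $\delta/\lambda\ge 1-1/c$). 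The endpoints $\lambda=0$ and $\lambda=1$ recover the pure isolated-vertex construction (giving $\log a_1$) and the embedded single-clique construction (giving $\log a_c+R^*(K_c,\delta)$), both of which the formula reproduces; everything else in the argument is routine.
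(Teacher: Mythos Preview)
Your proposal is correct and follows essentially the same route as the paper: the paper's proof is the one-line decomposition $G(n,d)=\sum_i \binom{n}{i}a_c^i a_1^{n-i} K_c(i,d)$ followed by picking the largest term with $i=\lambda n$, which is exactly your profile argument and Laplace step. Your version is simply more explicit about the exchange of limit and maximum, which the paper leaves implicit.
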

\begin{IEEEproof}
$G(n,d) = \sum_i \binom{n}{i}a_c^ia_1^{n-i} K_c(i,d)$.
$R^*(G,\delta)$ depends only on the largest term in this sum.
Let $i = \lambda n$.
We have $\limsup_{n\to\infty} \lambda \frac{1}{\lambda n} \log K_c(\lambda n,\delta n) = p R^*\left(K_c,\frac{\delta}{\lambda}\right)$ and $\binom{n}{i}a_c^ia_1^{n-i} = a_1^n\binom{n}{i}(q-1)^i$.
\end{IEEEproof}

\bibliographystyle{IEEEtran}

\end{document}